\documentclass[conference]{IEEEtran}
\IEEEoverridecommandlockouts
% The preceding line is only needed to identify funding in the first footnote. If that is unneeded, please comment it out.
\usepackage{cite}
\usepackage{graphicx}
\usepackage{xcolor}
\usepackage{amsmath,amssymb,amsfonts}
\usepackage{amsthm}
\usepackage{dsfont}
\usepackage{enumitem}

\usepackage{url}
% \usepackage{bbm}
% \usepackage{bbold}
% \usepackage{float}
% \usepackage{caption}
% \usepackage{subcaption}

% \usepackage{hyperref} %remove once paper is finalized
%%%%%%%%%%%%%%%%%%%%%%%%%%%%%%%%%%%%%%%%%%%%%
%%%% Needed to add arXiv disclaimer text
%%%%%%%%%%%%%%%%%%%%%%%%%%%%%%%%%%%%%%%%%%%%%
\usepackage[pscoord]{eso-pic}% The zero point of the coordinate system is the lower left corner of the page (the default).
\newcommand{\placetextbox}[3]{% \placetextbox{<horizontal pos>}{<vertical pos>}{<stuff>}
  \setbox0=\hbox{#3}% Put <stuff> in a box
  \AddToShipoutPictureFG*{% Add <stuff> to current page foreground
    \put(\LenToUnit{#1\paperwidth},\LenToUnit{#2\paperheight}){\vtop{{\null}\makebox[0pt][c]{#3}}}%
  }%
}%

\DeclareMathOperator*{\argmin}{arg\,min}

% \usepackage[
% backend=biber,
% style=alphabetic,
% sorting=ynt
% ]{biblatex}
\def\BibTeX{{\rm B\kern-.05em{\sc i\kern-.025em b}\kern-.08em
    T\kern-.1667em\lower.7ex\hbox{E}\kern-.125emX}}
% \addbibresource{refs.bib}

\newtheorem{theorem}{Theorem}
\newtheorem{lemma}{Lemma}

\newtheorem{defn}{Definition}
\newtheorem*{corollary}{Corollary}

 % overline short italic
\newcommand{\inv}[1]{{#1}^{-1}}

%Make sure the gutter is appropriate width
\setlength{\columnsep}{0.17125 in}

\begin{document}
\placetextbox{0.5}{0.98}{\texttt{In preparation for submission}}%
% \placetextbox{0.5}{0.964}{\texttt{transferred without notice, after which this version may no longer be accessible.}}%

\title{Crowd Size Estimation for Non-Uniform Spatial Distributions with mmWave Radar}

\author{Anurag Pallaprolu, Aaditya Prakash Kattekola, Winston Hurst, \\ Upamanyu Madhow, Ashutosh Sabharwal and Yasamin Mostofi\thanks{Anurag Pallaprolu, Aaditya Prakash Kattekola, Winston Hurst, Upamanyu Madhow and Yasamin Mostofi are with the Dept. of Electrical and Computer Engineering, University of California, Santa Barbara, CA (email: \{apallaprolu, aadityaprakash, winstonhurst, madhow, ymostofi\} @ ucsb.edu). Ashutosh Sabharwal is with the Dept. of Electrical and Computer Engineering, Rice University, TX (email: ashu@rice.edu). This work is supported in part by NSF CNS award
2215646, and in part by ONR award N00014-23-1-2715.}}

\maketitle
% \vspace*{-30pt}
\begin{abstract}
In this paper, we present a novel methodology for crowd size estimation using monostatic mmWave radar. Our aim is to accurately count large crowds that follow a non-uniform spatial distribution. Our estimation approach relies on the rigorous derivation of occlusion probabilities, which are then used to mathematically characterize the probability distributions that describe the number of agents visible to the radar as a function of the crowd size. We then estimate the true crowd size by comparing these derived mathematical models to the empirical distribution of the number of visible agents detected by the radar. This method requires minimal sensing capabilities (e.g., angle-of-arrival information is not needed), thus being well suited for either a dedicated mmWave radar or an integrated sensing and communication (ISAC) system. Extensive numerical simulations validate our methodology, demonstrating strong performance across diverse spatial distributions and for crowd sizes of up to (and including) 30 agents. We achieve a mean absolute error (MAE) of 0.48 agents, significantly outperforming a baseline which assumes that the agents are uniformly distributed in the area. Overall, our approach holds significant promise for a variety of applications including network resource allocation, crowd management, and urban planning.

\end{abstract}
\begin{IEEEkeywords}
Crowd Size Estimation, Integrated Sensing and Communication, Crowd Analytics, mmWave Radar\vspace{-5pt}%, Environment-Aware Communications
\end{IEEEkeywords}
\section{Introduction}
\label{sec:intro}

Driven in part by the use of mmWave spectrum in 5G networks~\cite{ericssonNov2023MobilityReport} and its continued importance in 6G~\cite{rappaport2019wireless}, the use of mmWave radar has garnered significant attention over recent years. Whether as standalone units or in the context of Integrated Sensing and Communication (ISAC)~\cite{liu2022integrated, yao2022intelligent}, sensing with mmWave signals can deliver critical contextual information that not only enhances communication system performance but also serves as a key resource for a wider range of other applications.

In many settings, a key piece of contextual information is the number of people in an area. In addition to being valuable in a communications context (\textit{e.g.,} traffic prediction), crowd analytics offer valuable insights into collective behaviors, which are essential for applications ranging from retail~\cite{cruz2020CCTVRetail} to public health \cite{korany2021counting}. However, crowd counting using the mmWave spectrum faces significant challenges, as Line-of-Sight (LOS) blockages at higher frequencies result in the inability to detect individuals who are occluded by other people and objects~\cite{maltsev2009experimental}.

In this paper, we set forth a method for crowd size estimation capable of handling inhomogeneous spatial distributions, \textit{i.e.,} distributions where the density of agents varies across space due to external factors, examples of which are shown in Fig.~\ref{fig:motivation}. For instance, certain regions of an area may be inaccessible due to physical barriers (\textit{e.g.,} walls, chairs/tables), which make these regions untraversable. Inhomogeneity can also result from crowd clustering around landmarks and points of interest, such as promotional displays or food vendors.

\begin{figure}
    \centering
        \includegraphics[width=\linewidth]{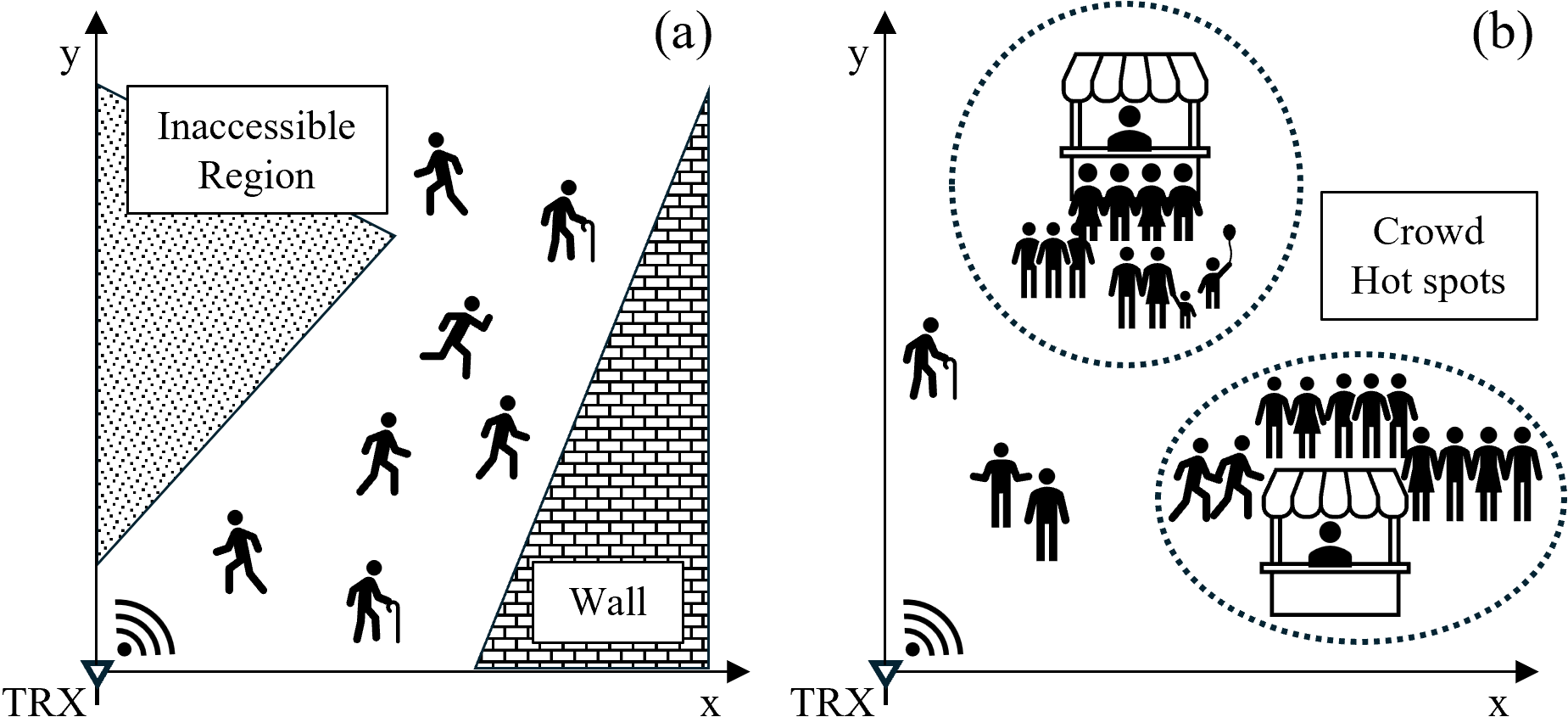}
        \vspace{-20pt}
        \caption{Examples of spatially inhomogeneous crowd behavior: (a) Regions of an area rendered inaccessible due to hard physical constraints, and (b) Crowds forming around points of interest and hotspots, leading to inhomogeneity.}
        \vspace{-20pt}
     \label{fig:motivation}
\end{figure}

\begin{figure*}
    \centering
        \includegraphics[width=\textwidth]{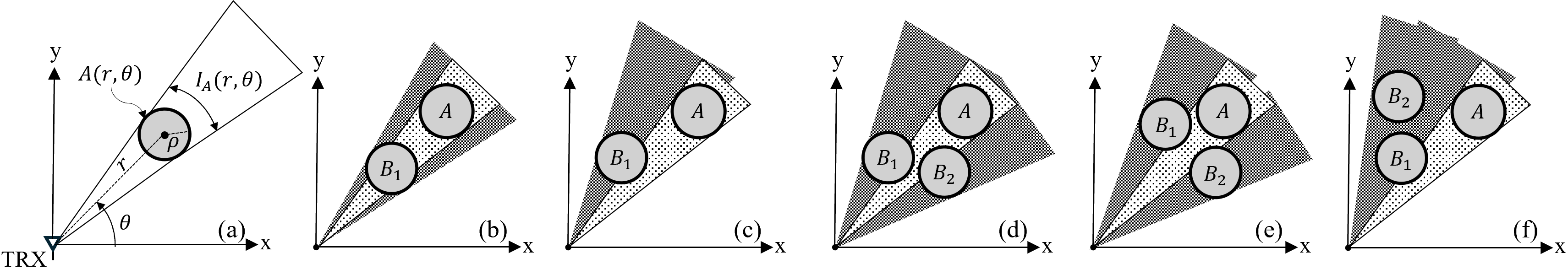}
        \vspace{-18pt}
        \caption{Blockage modeling of an arbitrary agent $A$ centered at $(r, \theta)$ in a crowd of $N$ mobile agents: (a) Geometrical definition of the visibility interval $I_A$ (b) $B_1$ blocks $A$ completely (c) $B_1$ blocks $A$ partially, rendering $A$ visible (d) $B_1$ and $B_2$ fully block $A$ simultaneously, while only partially blocking $A$ individually (e) $B_1$ and $B_2$ partially block $A$ together (f) $B_1$ and $B_2$ on one side of $A$.}
        \vspace{-18pt}
     \label{fig:vis-scenarios}
\end{figure*}

Our method represents a significant improvement in crowd size estimation over existing literature, as it can, more accurately, count larger crowds and effectively handle non-uniform crowd distributions. While WiFi signals have been previously used for occupancy estimation~\cite{depatla2015occupancy}, most prior works on mmWave crowd sensing have focused on counting based on tracking, and they are thus typically limited to handling small crowds (e.g., 6 or fewer individuals \cite{liu2024pmtrack, jiang2023robust}). More recently, we proposed a crowd size estimation framework using commodity mmWave FMCW radars, which accurately counted up to (and including) 21 people in real-world experiments \cite{pallaprolu2024}. However, that work assumes a uniformly distributed crowd, a condition that may not be true in some scenarios. Thus, this paper moves the state of the art forward by proposing a novel methodology for crowd size estimation that can handle non-uniform spatial distributions. More specifically, the spatial distribution of the crowd can be found from prior knowledge, such as the geometry of fixed barriers or historically observed crowd patterns around landmarks. We then demonstrate how this information can be leveraged to infer the total crowd size, even when some of the agents in the crowd are frequently not visible to the radar due to occlusion. We next discuss our contributions in detail.  

\noindent\textbf{Contributions:}
%describe the method
Estimating crowd sizes with mmWave signals presents challenges, as occluded individuals will not be visible to the radar board (thus not sensed), due to significant signal attenuation at these frequencies.  In addition, when considering non-uniform spatial distributions, the probability of visibility of an individual becomes location-dependent, posing new challenges that require new analysis.  In this paper, we build on our recent work \cite{pallaprolu2024} to develop a mathematical framework that can address non-uniform spatial distributions. More specifically, through rigorous mathematical modeling, we first characterize the location-dependent probability of blockage (visibility) of an individual.  We then show how to mathematically derive and utilize probability mass functions (PMFs) that describe the distribution of the number of visible agents as a function of the crowd size. These PMFs are then compared to the empirical distribution of the number of agents sensed by the radar to infer the crowd size. Extensive simulation results show that this approach works very well for spatially inhomogeneous crowds of up to (and including) 30 agents, achieving an MAE of 0.48 agents and further outperforming the state of the art. 

It is worth noting that our location-dependent mathematical modeling can also be used in a different context, \textit{i.e.,} to spatially characterize the statistics of LOS blockage, a problem of importance for channel modeling or adaptive beamforming in 5G/6G cellular networks~\cite{zeng2024tutorial}. In the existing literature, crowd modeling frequently assumes a homogeneous Poisson Point Process~\cite{hmamouche2021new}, which implies both spatial uniformity and point-like agents. However, this assumption can be limiting in capturing many practical scenarios. The proposed work of this paper can also contribute to this area as part of future work. 

% \vspace{-5pt}

%%%%%%%%%%%%%%%%%%%%%%%%%%%%%%%%%%%%%%%%%%%%%%%%%%%
\section{Blockage Modeling of Disc-Shaped Agents}
\label{sec:block_mod}
We consider a monostatic radar, either a standalone device or an ISAC-compatible communication access point (AP), with a field of view described by a quadrant $Q = \{(r, \theta)\ |\ r \in [\rho, r_m], \theta \in [0, \pi/2]\}$, reflecting the sectorized nature of mmWave APs. Consistent with the design of contemporary smart roadside infrastructure~\cite{agrawal2022intelligent}, the transmitting antenna height is taken to be around the human torso, so that the cylinder model is valid. Without loss of generality, we take the radar as the origin of our coordinate system. Our objective is to count the total number of agents, $N$, within $Q$. We model the agents as discs of radius $\rho$ with centers that are spatially distributed according to spatial probability distribution function denoted by $\mathbb{P}(\mathbf{x})$, which is defined for $x\in Q$ and known a priori. The radar is capable of detecting agents using techniques such as those presented in \cite{pallaprolu2024}, provided an LOS path exists from the agent to the radar. However, agents in the crowd may occlude each other, so that $N$ cannot be directly observed. In this section, we begin our derivation of the PMF of the number of visible agents, given a crowd of size $N$, by first presenting our mathematical model for blockages in the crowd.  

We next define the visibility interval of a disc-shaped agent $A$. Intuitively, any ray launched at an angle within this interval will necessarily intersect $A$, as shown in Fig.~\ref{fig:vis-scenarios} (a).

\begin{defn}
\label{def:vis_interval}
\textbf{Visibility Interval:} For any disc-shaped agent, A, centered at $(r, \theta)$ with a radius of $\rho$, we define its visibility interval as $I_A(r,\theta) = \biggl[\theta - \inv\sin\left(\frac{\rho}{r}\right)\!, \theta + \inv\sin\left(\frac{\rho}{r}\right)\biggr]$.
\end{defn}

We state that a Complete 1-Blockage occurs when an agent $A$ is entirely occluded by blocker $B_1$, as illustrated in Fig.~\ref{fig:vis-scenarios} (b), whereas a Partial 1-Blockage arises when $B_1$ only partially occludes $A$, as shown in Fig.~\ref{fig:vis-scenarios} (c). These concepts are formally presented in the following definitions.

\begin{defn}
\label{def:complete_one_blockage}
\textbf{Complete 1-Blockage:} Agent $A$ encounters a Complete 1-Blockage due to a blocker, $B_1$, only if $ I_{A} \subset I_{B_1}$.
\end{defn}

\begin{defn}
\label{def:partial_one_blockage}
\textbf{Partial 1-Blockage:} Let $G = I_A \cap I_{B_1}$, then $A$ encounters a Partial 1-Blockage only if $G \neq \phi$ and $G \subset I_A$.
\end{defn}

It is easy to see that $A$ encounters a Complete 1-Blockage due to $B_1$ only if $I_A \cap I_{B_1} = G = I_A$, and $B_1$ does not cast a shadow onto $A$ if $G = \phi$. In the case of a Partial 1-Blockage, it is possible for a different blocker, $B_2$, to also partially occlude $A$, such that the combined effect of $B_1$ and $B_2$ results in the complete blockage of $A$, as depicted in Fig~\ref{fig:vis-scenarios} (d). We next formally define this in terms of visibility intervals.

\begin{defn}
\label{def:complete_two_blockage}
\textbf{Simultaneous 2-Blockage:} Agent $A$ encounters a Simultaneous 2-Blockage due to blockers $B_1$ and $B_2$ only if
\begin{align}
    \label{eq:complete-2-blockage-def}
    (I_A \subset ( I_{B_1} \cup I_{B_2}))\ \wedge\ ( (I_A \not\subset I_{B_1}) \ \wedge\  (I_A \not\subset I_{B_2}))
\end{align}
\end{defn}
\begin{lemma}
\label{lemma:complete_2_blocker}
A single agent $A$ cannot encounter a Simultaneous 2-Blockage by blockers $B_1$ and $B_2$ if $I_{B_1} \cap I_{B_2} = \phi$.
\end{lemma}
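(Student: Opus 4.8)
The plan is to argue by contradiction and to lean on a single structural fact: by Definition~\ref{def:vis_interval}, each of $I_A$, $I_{B_1}$, $I_{B_2}$ is a genuine closed interval of the real line (its length $2\inv\sin(\rho/r)$ is strictly positive, since $\rho>0$ and $r\le r_m$), and is therefore a connected set. So I would begin by supposing, for contradiction, that $A$ encounters a Simultaneous 2-Blockage by $B_1$ and $B_2$ while $I_{B_1}\cap I_{B_2}=\phi$, and then derive a contradiction with Definition~\ref{def:complete_two_blockage}.

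The first step is to observe that $I_{B_1}\cup I_{B_2}$ is disconnected. Two disjoint \emph{closed} intervals cannot abut — a shared endpoint would belong to both — so without loss of generality $\sup I_{B_1} < \inf I_{B_2}$, and every point strictly between these two values lies in neither interval. Hence $I_{B_1}$ and $I_{B_2}$ are separated, and $I_{B_1}\cup I_{B_2}$ is a disjoint union of two nonempty, mutually separated pieces.

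The second step is to invoke the first clause of Definition~\ref{def:complete_two_blockage}, namely $I_A\subset I_{B_1}\cup I_{B_2}$. Since $I_A$ is connected and sits inside the union of the two separated pieces, it must lie entirely within one of them: $I_A\subset I_{B_1}$ or $I_A\subset I_{B_2}$. (Concretely, if $I_A$ met both pieces, choosing $a_1\in I_A\cap I_{B_1}$ and $a_2\in I_A\cap I_{B_2}$ with, say, $a_1<a_2$ would force the whole sub-interval $[a_1,a_2]\subseteq I_A$ to avoid the open gap $(\sup I_{B_1},\inf I_{B_2})$, which is impossible.) But $I_A\subset I_{B_1}$ or $I_A\subset I_{B_2}$ flatly contradicts the second clause of Definition~\ref{def:complete_two_blockage}, which demands $I_A\not\subset I_{B_1}$ and $I_A\not\subset I_{B_2}$. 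This contradiction establishes the lemma.

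I do not expect a real obstacle here; the only point requiring care — and the reason the statement is about intervals rather than arbitrary shadow sets — is the combined use of connectedness of $I_A$ and the fact that disjoint \emph{closed} intervals are automatically separated. A half-open pairing such as $[0,1)\cup[1,2]$ would break the ``$I_A$ lies in one component'' step, but Definition~\ref{def:vis_interval} produces closed intervals, so this degenerate situation cannot arise.
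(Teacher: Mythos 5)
Your proof is correct. It rests on the same geometric fact the paper is implicitly using --- a connected interval $I_A$ cannot be covered by two disjoint closed intervals unless it sits entirely inside one of them --- but you execute it differently and, frankly, more rigorously. The paper's proof asserts, without justification, that a Simultaneous 2-Blockage forces $I_A \nsubseteq (I_{B_1}\cup I_{B_2}) - (I_{B_1}\cap I_{B_2})$; it then notes that when $I_{B_1}\cap I_{B_2}=\phi$ this set equals $I_{B_1}\cup I_{B_2}$, contradicting the first clause of Definition~\ref{def:complete_two_blockage}. That unproven symmetric-difference claim is exactly the connectedness argument you spell out. Your version instead lands the contradiction on the second and third clauses: disjoint closed intervals are separated by a nonempty open gap, so $I_A\subset I_{B_1}\cup I_{B_2}$ together with connectedness forces $I_A\subset I_{B_1}$ or $I_A\subset I_{B_2}$, violating $I_A\not\subset I_{B_i}$. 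What your route buys is a complete argument with no hidden step, and your closing remark correctly identifies the one place where care is needed (closed versus half-open intervals); what the paper's phrasing buys is a statement (the symmetric-difference claim) that is meant to generalize to the $k$-blocker setting used in Lemma~\ref{lemma:same_side} and Theorem~\ref{thm:sim_3_blockage}, though it is never actually proved there either.
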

\begin{proof}\renewcommand{\qedsymbol}{}
Consider two blockers $B_1$ and $B_2$. As shown in Fig.~\ref{fig:vis-scenarios} (e), if their visibility intervals do not overlap, then $I_{B_1} \cap I_{B_2} = \phi$. As $A$ experiences a simultaneous blockage, we have
\begin{align}
    I_A &\nsubseteq  (( I_{B_1} \cup I_{B_2}) - (I_{B_1} \cap I_{B_2}))  = (( I_{B_1} \cup I_{B_2}) - \phi) \nonumber \\
    & \implies I_A \nsubseteq  ( I_{B_1} \cup I_{B_2}),\nonumber
\end{align}
which contradicts Eq.~\ref{eq:complete-2-blockage-def}. Thus, we have shown that $A$ cannot encounter a Simultaneous 2-Blockage if $I_{B_1} \cap I_{B_2} = \phi$.
\end{proof}
\vspace{-7pt}
We now formalize the fact that agent $A$ cannot be simultaneously blocked if all the blockers lie on the same side of $A$, as evidenced by the scenario shown in Fig.~\ref{fig:vis-scenarios} (f).

\begin{lemma}
\label{lemma:same_side}
A single agent $A$ located at $(r_A, \theta_A)$ cannot encounter a simultaneous blockage by $N \geq 2$ blockers $B_i$, located at $(r_{B_i}, \theta_{B_i})$, if $\theta_A > \theta_{B_i}\ \forall\ i$ or $\theta_A < \theta_{B_i}\ \forall\ i$ \textit{i.e.,} if all $B_i$ lie on the same side of $A$.
\end{lemma}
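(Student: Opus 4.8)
The plan is to argue by contradiction, working directly with the visibility intervals of Definition~\ref{def:vis_interval} and exploiting the fact that each such interval is \emph{symmetric} about the angular coordinate of its agent. Write $I_A = [\theta_A - a,\ \theta_A + a]$ with $a = \inv\sin(\rho/r_A)$, and $I_{B_i} = [\theta_{B_i} - b_i,\ \theta_{B_i} + b_i]$ with $b_i = \inv\sin(\rho/r_{B_i})$; all half-widths are well-defined and lie in $[0,\pi/2]$ since $r \ge \rho$ everywhere in $Q$, so there is no angular wrap-around to contend with and the intervals are genuine sub-intervals of $\mathbb{R}$. Generalizing Definition~\ref{def:complete_two_blockage} in the natural way, a simultaneous blockage of $A$ by $B_1,\dots,B_N$ requires both $I_A \subset \bigcup_{i=1}^{N} I_{B_i}$ and $I_A \not\subset I_{B_i}$ for every $i$. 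Assume, towards a contradiction, that such a blockage occurs with all blockers on the same side, say $\theta_{B_i} < \theta_A$ for all $i$; the case $\theta_{B_i} > \theta_A$ for all $i$ is handled identically by arguing with the left endpoint of $I_A$ in place of the right (equivalently, by the reflection $\theta \mapsto 2\theta_A - \theta$).

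Next I would single out the right endpoint $\theta_A + a$ of $I_A$. Since $I_A \subset \bigcup_i I_{B_i}$ is a finite union of closed intervals, this point lies in some $I_{B_j}$, so $\theta_A + a \le \theta_{B_j} + b_j$. Because $\theta_{B_j} < \theta_A$, this forces $b_j \ge a + (\theta_A - \theta_{B_j}) > a$. The key observation is then that $B_j$ alone already covers all of $I_A$: its left endpoint satisfies $\theta_{B_j} - b_j < \theta_A - b_j < \theta_A - a$ (using $\theta_{B_j} < \theta_A$ and then $b_j > a$), while its right endpoint satisfies $\theta_{B_j} + b_j \ge \theta_A + a$ by construction. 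Hence $I_A \subset I_{B_j}$, which contradicts the requirement $I_A \not\subset I_{B_j}$ and completes the proof.

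The only real subtlety — the step worth stating carefully — is the observation that a blocker whose center lies strictly to one side of $A$ but whose interval reaches past the \emph{far} endpoint of $I_A$ must in fact contain $I_A$ in its entirety. This is exactly where the ``same side'' hypothesis does the work: it collapses the general $N$-blocker claim to the single-blocker case with no case analysis on how the $I_{B_i}$ overlap one another (indeed, it recovers Lemma~\ref{lemma:complete_2_blocker} as the $N=2$ instance in disguise). I anticipate no serious obstacle beyond making this monotonicity argument precise and noting the boundary detail that, since $r\ge\rho$ on $Q$, all the arcsine half-widths are bounded by $\pi/2$.
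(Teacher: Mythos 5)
Your proof is correct, and it takes a genuinely different (and arguably cleaner) route than the paper's. The paper's argument orders the partial shadows $G_i = I_A \cap I_{B_i}$ by length, observes that for same-side blockers every $G_i$ abuts the near endpoint of $I_A$ (i.e., has the form $[\theta_A - \inv\sin(\rho/r_A),\, g_i]$), and concludes that the shadows are nested, so their union equals the largest one and is a proper subinterval of $I_A$ --- contradicting the coverage requirement $I_A \subset \bigcup_i I_{B_i}$. You instead isolate the single blocker $B_j$ whose interval covers the \emph{far} endpoint of $I_A$, show that the same-side hypothesis forces $b_j > a$ and hence $I_{B_j} \supseteq I_A$, and contradict the non-containment clause $I_A \not\subset I_{B_j}$ instead. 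What your version buys: it needs no ordering or nesting of the $N$ shadows, and it sidesteps an unstated assumption in the paper's proof --- the claim that every $G_i$ has the form $[\theta_A - \inv\sin(\rho/r_A),\, g_i]$ can fail for a blocker whose visibility interval lies strictly in the interior of $I_A$ (a narrower, hence more distant, disc), whereas your endpoint argument only ever reasons about the one blocker that matters and derives its width bound from first principles. One parenthetical remark is off: Lemma~\ref{lemma:complete_2_blocker} is not the $N=2$ instance of this lemma in disguise, since its hypothesis is $I_{B_1} \cap I_{B_2} = \phi$ (disjoint blocker intervals), which neither implies nor is implied by the blockers lying on the same side of $A$; this slip is tangential and does not affect the validity of your argument.
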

\begin{proof}\renewcommand{\qedsymbol}{}
We begin by considering $N$ blockers located at $(r_{B_i}, \theta_{B_i})\ \forall\ i = 1, 2, \hdots, N$ and an agent $A$ located at $(r_A, \theta_A)$ such that all the blockers are on the same side. We then define $G_i = I_A \cap I_{B_i}$ (see Def.~\ref{def:partial_one_blockage}). Next, let us order these blockers based on the extent of their partial blockage on $A$. Without loss of generality, we have:
\begin{align}
\label{eq:partial_blockage_ordering}
|I_A| > |G_1| \geq |G_2| \geq\hdots \geq |G_N| > 0.
\end{align}
We next show that there always exists a dominant partial blocker rendering all other blockers redundant.\\
\noindent$\textbf{When } \pmb{\theta_A > \theta_i\ :}$ As $G_i \neq I_A \ \text{and}\ G_i \neq \phi\ \forall\ i$, in this case we can directly see that $G_i$ must be of the form $G_i = \left[\theta_A -\inv\sin\left(\frac{\rho}{r_A}\right), g_i\right], \text{ for suitable }g_i \in \mathbb{R}$.
Applying the ordering introduced in Eq.~\ref{eq:partial_blockage_ordering}, we see that $|G_1| \geq\hdots \geq |G_N| \implies  g_1 \geq \hdots \geq g_N$, which is sufficient to prove that $G_1 \supseteq G_2 \supseteq \hdots \supseteq G_N$.\\
\noindent\textbf{When } $\pmb{\theta_A < \theta_i\ :}$ By applying symmetry, it is easy to show that even in this case $G_1 \supseteq G_2 \supseteq \hdots \supseteq G_N$. 

In conclusion, we see that $G_1$ contains all other partial blockages, and we have shown that when all blockers are on the same side, $B_1$ is the dominant partial blocker.
\end{proof}
We shall next define a Simultaneous 3-Blockage using the visibility interval formalism.
\begin{defn}
\label{def:simultaneous_3_blockage}
\textbf{Simultaneous 3-Blockage:} Agent $A$ experiences a Simultaneous 3-Blockage due to blockers $B_1, B_2$ and $B_3$ if
\begin{align*}
    &((I_A \subset ( I_{B_1} \cup I_{B_2} \cup I_{B_3}))\ \wedge\ I_A \not\subset (I_{B_1} \cup I_{B_2})\ \\  
    & \wedge\ I_A \not\subset (I_{B_2} \cup I_{B_3})\ \ \wedge\ 
     I_A \not\subset (I_{B_3} \cup I_{B_1})\ \\ & \wedge\  I_A \not\subset I_{B_1}\ \wedge\  I_A \not\subset I_{B_2}\ \wedge\  I_A \not\subset I_{B_3}.
\end{align*}
\end{defn}
\begin{theorem}
\label{thm:sim_3_blockage}
A single agent, $A$, cannot encounter a Simultaneous 3-Blockage by blockers $B_1, B_2$ and $B_3$, in a crowd of disc-shaped agents.
\end{theorem}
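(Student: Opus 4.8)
The plan is a proof by contradiction. Suppose $A$ suffers a Simultaneous 3-Blockage by $B_1, B_2, B_3$, and write $G_i = I_A \cap I_{B_i}$. Intersecting the inclusion $I_A \subset I_{B_1} \cup I_{B_2} \cup I_{B_3}$ with $I_A$ gives $I_A = G_1 \cup G_2 \cup G_3$. Each $G_i$ is nonempty --- if, say, $G_1 = \phi$ then $I_A \subset I_{B_2} \cup I_{B_3}$, which Definition~\ref{def:simultaneous_3_blockage} forbids --- and each $G_i$ is a proper subinterval of $I_A$, since $I_A \not\subset I_{B_i}$. Write $I_A = [\ell, u]$ for its two endpoints.

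The single geometric fact the argument needs is this: because $B_i$ is a blocker of $A$ it lies between the radar and $A$, so $r_{B_i} < r_A$, hence $\inv\sin(\rho/r_{B_i}) > \inv\sin(\rho/r_A)$ and $|I_{B_i}| > |I_A|$. If such an interval meets $I_A$ but contains neither endpoint of $I_A$, it lies strictly inside the open interval $(\ell, u)$, forcing $|I_{B_i}| < |I_A|$ --- impossible; and it cannot contain both endpoints, for then $I_A = [\ell, u] \subseteq I_{B_i}$, contradicting $I_A \not\subset I_{B_i}$. Hence every shadow is ``anchored'' at exactly one endpoint of $I_A$: each $G_i$ is either $[\ell, c_i]$ with $c_i < u$ or $[d_i, u]$ with $d_i > \ell$. (This is the same structural observation used, for the same-side case, inside the proof of Lemma~\ref{lemma:same_side}; it also rules out $\theta_{B_i} = \theta_A$, which together with $r_{B_i} < r_A$ would force a Complete 1-Blockage.)

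A pigeonhole step now finishes the proof. Each of the three blockers is anchored at $\ell$ or at $u$, so two of them are anchored at the same endpoint; without loss of generality let $B_1$ and $B_2$ be anchored at $\ell$, say $G_1 = [\ell, c_1]$ and $G_2 = [\ell, c_2]$ with $c_2 \le c_1$, so $G_2 \subseteq G_1$. Then $I_A = G_1 \cup G_2 \cup G_3 = G_1 \cup G_3 \subseteq I_{B_1} \cup I_{B_3}$, contradicting the clause $I_A \not\subset (I_{B_1} \cup I_{B_3})$ of Definition~\ref{def:simultaneous_3_blockage}. The subcase in which the two coinciding anchors are at $u$ is identical by symmetry. (Equivalently, one can reach the $2$-versus-$1$ split by invoking Lemma~\ref{lemma:same_side} to forbid all three blockers from lying on one side of $A$, then applying the dominance ordering from its proof to the two that share a side.)

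The step I expect to be the main obstacle is the geometric reduction of the second paragraph --- establishing that each $G_i$ is anchored at an endpoint of $I_A$ --- since it is the only place the physical modeling assumption (a blocker sits nearer the radar than the agent it blocks, hence has a strictly wider visibility interval) must be made explicit. Without it one could place $I_{B_3}$ strictly inside $(\ell, u)$ and the clean pigeonhole would fail; everything after the reduction is routine interval arithmetic.
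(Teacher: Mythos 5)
Your proof is correct, and it follows the route the paper gestures at: the paper itself omits the argument entirely (``Shown using Lemma~\ref{lemma:same_side}, the proof is omitted for brevity''), so yours is the complete version of the intended pigeonhole -- at least two of the three blockers must lie on the same side of $A$, and Lemma~\ref{lemma:same_side}'s dominance ordering then makes one of them redundant, contradicting the minimality clauses of Definition~\ref{def:simultaneous_3_blockage}. The one place you add genuine value beyond the paper is the second paragraph: the paper's proof of Lemma~\ref{lemma:same_side} simply asserts that each $G_i$ has the anchored form $[\theta_A - \inv\sin(\rho/r_A),\, g_i]$, whereas you justify it from the physical premise $r_{B_i} < r_A \Rightarrow |I_{B_i}| > |I_A|$, which is exactly what rules out a shadow sitting strictly inside $(\ell, u)$ and is nowhere made explicit in the paper. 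Your anticipated ``main obstacle'' is therefore well chosen; the rest is the same interval arithmetic the paper relies on.
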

\begin{proof}\renewcommand{\qedsymbol}{}
Shown using Lemma 2, the proof is omitted for brevity.
\end{proof}
% \begin{proof}\renewcommand{\qedsymbol}{}
% Let us consider an agent, $A$, located at $(r_A, \theta_A)$ and 3 blockers $B_1,\ B_2$ and $B_3$ located at $(r_{B_1}, \theta_{B_1}), (r_{B_2}, \theta_{B_2})$ and $(r_{B_3}, \theta_{B_3})$ respectively. Without loss of generality, we assume that $\theta_{B_1} < \theta_{B_2} < \theta_{B_3}$.
% Now, $A$ can only be placed in one of four possible arrangements, and in all these arrangements, at least 2 blockers are on the same side of $A$. Using Lemma~\ref{lemma:same_side}, we can reduce multiple blockers on the same side to a single dominant partial blocker. Thus, $A$ can at most be blocked by 2 blockers, which completes the proof.
% % These are:
% % \begin{enumerate}
% %     \item $\theta_{B_1} < \theta_{B_2} < \theta_{B_3} < \theta_A$
% %      \item $\theta_{B_1} < \theta_{B_2} < \theta_A < \theta_{B_3} $
% %     \item $\theta_{B_1} < \theta_A < \theta_{B_2} < \theta_{B_3} $
% %     \item $\theta_A < \theta_{B_1} < \theta_{B_2} < \theta_{B_3}$
% % \end{enumerate}
% % Clearly, in all these arrangements, at least 2 blockers are on the same side of $A$, and we can reduce multiple blockers on the same side to a single dominant partial blocker using Lemma~\ref{lemma:same_side}. Hence, $A$ can at most be blocked by 2 blockers and thus, it can never experience a Simultaneous 3-Blockage.
% \end{proof}

\begin{corollary}
\label{corr:simultaneous-k-blockage}
A single agent $A$ cannot encounter a Simultaneous K-Blockage by blockers $B_1, B_2,\hdots B_K$ for $K \geq 3$, in a crowd of disc-shaped agents.
\end{corollary}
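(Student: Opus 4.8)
The plan is to argue by contradiction, in a way that simultaneously generalizes the proof of Theorem~\ref{thm:sim_3_blockage}. First I would pin down the natural generalization of Definition~\ref{def:simultaneous_3_blockage}: a Simultaneous $K$-Blockage of $A$ by $B_1,\dots,B_K$ means $I_A \subset \bigcup_{i=1}^{K} I_{B_i}$ together with $I_A \not\subset \bigcup_{i\in S} I_{B_i}$ for every proper subset $S \subsetneq \{1,\dots,K\}$ (for $K=3$ this is exactly Definition~\ref{def:simultaneous_3_blockage}). Suppose such a configuration exists for some $K \geq 3$. Writing $G_i = I_A \cap I_{B_i}$, I would first remove degeneracies: if $G_i = \phi$ for some $i$, the remaining $K-1$ blockers already cover $I_A$, contradicting minimality; and if $G_i = I_A$ for some $i$, then $B_i$ alone is a Complete 1-Blockage, again contradicting minimality. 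Hence without loss of generality every $B_i$ is a genuine partial blocker with $\phi \neq G_i \subsetneq I_A$, which is precisely the hypothesis under which Lemma~\ref{lemma:same_side} operates.

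Next I would partition the blockers by the side of $A$ on which they sit: $L=\{i:\theta_{B_i}<\theta_A\}$ and $R=\{i:\theta_{B_i}>\theta_A\}$ (a blocker collinear with $A$ that casts any shadow forces $I_A\subset I_{B_i}$, a Complete 1-Blockage already excluded, so the partition is exhaustive). Applying the argument inside the proof of Lemma~\ref{lemma:same_side} to the blockers in $L$ --- which all lie on one side of $A$ --- their partial blockages are totally nested, $G_{(1)}\supseteq G_{(2)}\supseteq\cdots$, so $\bigcup_{i\in L}G_i = G_\ell$ for a single dominant index $\ell\in L$; likewise $\bigcup_{i\in R}G_i = G_r$ for a single dominant $r\in R$. (If one side is empty, all $B_i$ lie on one side, and Lemma~\ref{lemma:same_side} collapses everything to a single dominant blocker, i.e. a Complete 1-Blockage.) Because the covering condition gives $\bigcup_{i=1}^{K}G_i = I_A\cap\bigcup_{i=1}^{K}I_{B_i}=I_A$, I conclude $I_A = G_\ell\cup G_r \subseteq I_{B_\ell}\cup I_{B_r}$: the agent is in fact covered by just two of the blockers.

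The contradiction then follows immediately for $K\geq 3$: the set $\{B_\ell,B_r\}$ has size at most $2\leq K-1$, so it is contained in some proper subset $S\subsetneq\{1,\dots,K\}$ with $|S|=K-1$, and $I_A\subseteq I_{B_\ell}\cup I_{B_r}\subseteq\bigcup_{i\in S}I_{B_i}$ violates the minimality clause in the definition of a Simultaneous $K$-Blockage. Taking $K=3$ re-proves Theorem~\ref{thm:sim_3_blockage}. I expect the only delicate part is the bookkeeping rather than any new geometry: making sure the generalized definition truly specializes to Definition~\ref{def:simultaneous_3_blockage}, verifying that each surviving $B_i$ meets the hypotheses of Lemma~\ref{lemma:same_side}, and cleanly disposing of the empty-side and collinear corner cases. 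All of the structural content is already carried by Lemma~\ref{lemma:same_side}, so once these points are nailed down the result follows with no further computation.
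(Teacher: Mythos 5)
Your argument is correct and follows the paper's intended route: the paper states the corollary without proof (and justifies Theorem~\ref{thm:sim_3_blockage} only with ``shown using Lemma~\ref{lemma:same_side}, omitted for brevity''), and your reduction---discard degenerate blockers, apply the nesting argument inside Lemma~\ref{lemma:same_side} to each side of $A$ to obtain a single dominant blocker per side, then contradict the minimality clause since two blockers already cover $I_A$ when $K \geq 3$---is precisely the completion of that sketch, generalized to arbitrary $K$. The one caveat you inherit from the paper rather than introduce yourself is that the proof of Lemma~\ref{lemma:same_side} assumes each $G_i$ is anchored at an endpoint of $I_A$, which silently excludes blockers whose visibility interval lies strictly in the interior of $I_A$; that gap belongs to the paper's lemma, not to your corollary argument.
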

% \begin{proof}\renewcommand{\qedsymbol}{}
% Extending the earlier reasoning, $A$ can only be blocked by the single dominant blocker on either side, and hence a Simultaneous K-Blockage always regresses to a Simultaneous 2-Blockage for $K \geq 3$ via mathematical induction. 
% \end{proof}
In the next section, we apply our blockage model to develop a mathematical framework for crowd size estimation.

\section{A Mathematical Framework for Crowd Size Estimation}\label{sec:vis_maps}
We now introduce a crowd size estimation method based on the blockage model developed in Section~\ref{sec:block_mod} and  further extend the approach presented in \cite{pallaprolu2024} to address non-uniform spatial crowd distributions. We begin by mathematically modeling the blockage probability at a given point in space, which we then use to characterize the distribution of the total number of visible agents for a given crowd size $N$. Throughout, we define $\mathcal{V}$ as the event in which an agent, located at $\mathbf{x} = (r\cos\theta, r\sin\theta)$, is visible to the AP, and denote the likelihood of visibility by $\mathbb{P}(\mathcal{V}|N, \mathbf{x})$.

Define $D_1$ and $D_2$ as the events that $A$ experiences at least one Complete $1-$Blockage and at least one Simultaneous $2-$Blockage, respectively. We analyze these events using our interval-based visibility framework, which helps us compute the likelihood of visibility as
\vspace{-3pt}
\begin{align}
\label{eq:vis_n_x}
    &\mathbb{P}(\mathcal{V} | N, \mathbf{x}) = 1 - \mathbb{P}(D_1\cup D_2|N, \mathbf{x}) \nonumber \\
    &= 1 - \mathbb{P}(D_1|N, \mathbf{x}) - \mathbb{P}(D_2|N, \mathbf{x}) + \mathbb{P}(D_1 \cap D_2 | N, \mathbf{x}).
\end{align}

\vspace{-3pt}
We now derive analytical expressions for $\mathbb{P}(D_1|N, \mathbf{x})$, $\mathbb{P}(D_2|N, \mathbf{x})$, and $\mathbb{P}(D_1\cap D_2|N, \mathbf{x})$, thereby completing the characterization of $\mathbb{P}(\mathcal{V}|N, \mathbf{x})$. Using this, we further determine the network size, $N$, solely based on the statistics of the observed number of visible agents. The Principle of Inclusion and Exclusion~\cite{comtet2012advanced} is applied extensively to compute various blockage-related densities, which we present below for the sake of completeness.
% \vspace{-5pt}
\begin{theorem}
\label{thm:IEP}
\textbf{Principle of Inclusion and Exclusion (PIE): }Let $C_1, C_2,\hdots , C_M$ be events in a probability space $(\Omega, \mathcal{F}, \mathbb{P})$. Let us define the generic summand
\vspace{-5pt}
\begin{align}
\label{eq:s_k_blockage}
S_k=\sum_{1\leq i_1 < i_2<\hdots<i_k\leq M-1}\!\!\!\!\!\!\!\!\!\mathbb{P}\!\left(C_{i_1}\cap C_{i_2}\cap \hdots \cap C_{i_k}\right).
\end{align}
We then have $\mathbb{P}\left(C_1 \cup C_2 \cup \hdots \cup C_M\right)\!=\!\sum_{k=1}^{M}(-1)^{k+1}S_k.$
\end{theorem}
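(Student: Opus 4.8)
The plan is to establish the identity through the indicator-function route rather than a purely set-theoretic induction, because it isolates the combinatorics most cleanly and makes the passage to probabilities a single application of linearity. First I would move from events to random variables on $(\Omega, \mathcal{F}, \mathbb{P})$: for an event $C$ let $\mathbf{1}_C$ denote its indicator. The starting point is the pointwise identity
\begin{align}
\mathbf{1}_{C_1 \cup C_2 \cup \dots \cup C_M} = 1 - \prod_{i=1}^{M} \left(1 - \mathbf{1}_{C_i}\right),
\end{align}
which holds at every $\omega \in \Omega$ because the right-hand side equals $0$ precisely when $\omega$ lies in none of the $C_i$ and equals $1$ otherwise.

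Next I would expand the product by distributing the $M$ binomial factors. This produces a sum indexed by all subsets $T \subseteq \{1, \dots, M\}$, with the empty set contributing the constant $1$:
\begin{align}
\prod_{i=1}^{M} \left(1 - \mathbf{1}_{C_i}\right) = \sum_{T \subseteq \{1,\dots,M\}} (-1)^{|T|} \prod_{i \in T} \mathbf{1}_{C_i} = \sum_{T \subseteq \{1,\dots,M\}} (-1)^{|T|}\, \mathbf{1}_{\bigcap_{i \in T} C_i},
\end{align}
where I have used $\prod_{i \in T} \mathbf{1}_{C_i} = \mathbf{1}_{\bigcap_{i \in T} C_i}$. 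Substituting into the first identity, the $T = \emptyset$ term cancels the leading $1$, and regrouping the remaining subsets by their cardinality $k$ gives
\begin{align}
\mathbf{1}_{C_1 \cup \dots \cup C_M} = \sum_{k=1}^{M} (-1)^{k+1} \sum_{\substack{T \subseteq \{1,\dots,M\} \\ |T| = k}} \mathbf{1}_{\bigcap_{i \in T} C_i}.
\end{align}

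Finally I would take expectations of both sides. Since every indicator is bounded and $M$ is finite, linearity of expectation applies term by term, and $\mathbb{E}[\mathbf{1}_E] = \mathbb{P}(E)$ for each $E \in \mathcal{F}$; the inner sum over $k$-element subsets $T = \{i_1 < \dots < i_k\}$ is exactly the symmetric sum $S_k$ of Eq.~\ref{eq:s_k_blockage}. This delivers $\mathbb{P}(C_1 \cup \dots \cup C_M) = \sum_{k=1}^{M} (-1)^{k+1} S_k$, as claimed. I do not expect any genuine obstacle here — the result is classical; the only point demanding care is the bookkeeping in the expansion step, namely verifying that the empty subset is treated correctly so the constant terms cancel, and that grouping subsets by size reproduces precisely the $S_k$. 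An alternative would be induction on $M$ using the two-set case $\mathbb{P}(A \cup C_M) = \mathbb{P}(A) + \mathbb{P}(C_M) - \mathbb{P}(A \cap C_M)$ with $A = C_1 \cup \dots \cup C_{M-1}$ together with the distributive law $A \cap C_M = \bigcup_{i < M} (C_i \cap C_M)$; there the same bookkeeping resurfaces as a Pascal-type identity when one collects the coefficient of each $k$-fold intersection, so the indicator argument is the more economical of the two.
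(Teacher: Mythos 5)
Your proof is correct and complete: the indicator identity $\mathbf{1}_{\cup_i C_i} = 1 - \prod_i(1-\mathbf{1}_{C_i})$, the subset expansion, and the passage to probabilities by linearity of expectation are all handled properly, and for finitely many events there are no integrability subtleties to worry about. Note, however, that the paper offers no proof of this theorem at all --- it is stated as a classical result with a citation to Comtet and used as a tool, so there is no in-paper argument to compare against; your write-up is simply the standard textbook derivation, supplied where the authors chose to omit one. One small discrepancy worth flagging: your final step identifies the inner sum over $k$-element subsets of $\{1,\dots,M\}$ with the $S_k$ of Eq.~\ref{eq:s_k_blockage}, but as printed that equation restricts the indices to $i_k \leq M-1$, which would exclude every intersection involving $C_M$ and make the stated identity false (e.g., for $M=1$ it would give $S_1 = 0 \neq \mathbb{P}(C_1)$). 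This is evidently a typo in the paper --- carried over from the later applications where the events are indexed $1$ through $N-1$ --- and your proof establishes the correct version with the upper limit $M$; you should just state explicitly that this is the version you are proving rather than asserting agreement with the printed formula.
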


%%%%%%%%%%%%%%
\subsection{$\mathbb{P}(D_1|N,\mathbf{x})$: At Least One Complete 1-Blockage}
We begin by analyzing the probability that the agent $A$ (located at $\mathbf{x}\in Q$) encounters at least one Complete 1-Blockage. We define $O_i$ as the event that $A$ experiences a Complete 1-Blockage due to the $i^{\text{th}}$ blocker, $B_i$, where $1\leq i \leq N-1$. We recall that all agent locations in the crowd are independently sampled from the prior spatial distribution, 
$\mathbb{P}(\mathbf{x})$. Given that the event $O_i$ is fully determined by the visibility interval of $B_i$ (Def.~\ref{def:complete_one_blockage}), we conclude that the events $O_i\sim O$ are also independent and identically distributed (i.i.d), when conditioned on $A$ being positioned at $\mathbf{x}$. Thus, we utilize the PIE to obtain $\mathbb{P}(D_1|N,\mathbf{x}) = \mathbb{P}(O_1 \cup O_2 \cup \hdots \cup O_{N-1} | N, \mathbf{x})$:
\begin{align}
\label{eq:iep_d1}
    \mathbb{P}(D_1|N,\mathbf{x}) =\!\sum_{k=1}^{N-1}(-1)^{k+1}\binom{N-1}{k}\mathbb{P}\!\left(O|N,\mathbf{x}\right)^k.
\end{align}
In order to evaluate $\mathbb{P}(O|N, \mathbf{x})$, we utilize Def.~\ref{def:complete_one_blockage}. We drop the conditional dependence\footnote{The event $O$ only relies on the fact that $N \geq 2$, which is taken care of by the binomial coefficient in the summation of Eq.~\ref{eq:iep_d1}.} on $N$, and we require the center of the disc-shaped blocker $B_1$ to lie in a region such that $I_A \subset I_{B_1}$ (see Fig.~\ref{fig:vis-scenarios} (b)). In other words,
\begin{align}
\label{eq:s_k_exact_1_blockage}
 \mathbb{P}(O|\mathbf{x}) = p_1(\mathbf{x}) \overset{\Delta}{=} \int_{\mathcal{R}_1(\mathbf{x})} \mathbb{P}(\mathbf{y})d\mathbf{y},
\end{align}
where $\mathcal{R}_1(\mathbf{x}) = \{\mathbf{y} \in Q\mid I(\mathbf{y}) \supset I(\mathbf{x})\}$, with $I(\mathbf{x})$ as defined in Def.~\ref{def:vis_interval}. Using $\mathcal{R}_1(\mathbf{x})$, we can then exactly obtain $p_1(\mathbf{x})$ by evaluating the integral of Eq.~\ref{eq:s_k_exact_1_blockage}. We plug the result of this integration into Eq.~\ref{eq:iep_d1} to obtain $\mathbb{P}(D_1|N,\mathbf{x}) = 1 - (1 - p_1(\mathbf{x}))^{N-1}$. We next provide an analytical derivation of $\mathcal{R}_1(\mathbf{x})$.
\begin{theorem}
\label{thm:single_onion}
Let a disc-shaped agent $A$ of radius $\rho$ be located at a point $\mathbf{x} = (r_A, \theta_A) \in Q$. The region $\mathcal{R}_1(\mathbf{x})$ is given by $\{(r, \theta)\mid \rho \leq r < r_A, \theta_A - \theta_c(r) \leq \theta \leq \theta_A + \theta_c(r)\}$, where 
\begin{align}
    \theta_c(r) = \inv\tan\left(\frac{\rho}{r} - \frac{\rho}{\sqrt{r_A^2 - \rho^2}}\right) \nonumber
\end{align}
\end{theorem}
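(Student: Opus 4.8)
The plan is to unwind Definition~\ref{def:vis_interval} directly. By the definition of $\mathcal{R}_1$, a point $\mathbf{y}=(r,\theta)$ lies in it iff $I(\mathbf{y})\supset I(\mathbf{x})$, which, written out in terms of the endpoints of the two intervals, is the pair of inequalities
\begin{align*}
\theta-\inv\sin\!\Big(\tfrac{\rho}{r}\Big)\ \le\ \theta_A-\inv\sin\!\Big(\tfrac{\rho}{r_A}\Big),\qquad \theta+\inv\sin\!\Big(\tfrac{\rho}{r}\Big)\ \ge\ \theta_A+\inv\sin\!\Big(\tfrac{\rho}{r_A}\Big).
\end{align*}
I would work in polar coordinates rotated so that $\theta_A$ is the reference direction, i.e.\ set $\phi=\theta-\theta_A$; the two inequalities then collapse to the single symmetric condition $|\phi|\le \inv\sin(\rho/r)-\inv\sin(\rho/r_A)$.

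From this compact form the radial range falls out: the right-hand side is nonnegative only when $\rho/r\ge\rho/r_A$, i.e.\ $r\le r_A$, with $r=r_A$ forcing $\phi=0$ (the degenerate case $\mathbf{y}=\mathbf{x}$), while $r\ge\rho$ is needed both for $\inv\sin(\rho/r)$ to be defined and for the blocker disc to sit inside $Q$. Hence $\rho\le r<r_A$. I would emphasize here that the interval-containment condition already encodes that the blocker is ``in front of'' $A$ (via $r<r_A$), so no separate ray-tracing argument about radial ordering is needed --- that is the step one is most tempted to over-engineer.

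For the angular boundary I would give the equality $|\phi| = \inv\sin(\rho/r)-\inv\sin(\rho/r_A)$ a geometric reading: it is precisely the locus of blocker centers for which the tangent ray from the radar to the blocker disc (on the side away from the blocker) coincides with the corresponding tangent ray to $A$'s disc. Equivalently, the blocker center lies on the line through the center of $A$ parallel to that tangent line of $A$, so that both candidate boundary lines pass through $C_A$. Writing this line in polar form --- $r\sin\!\big(\phi+\inv\sin(\rho/r_A)\big)=\rho$ --- and solving for $\phi$, then simplifying using $\tan\!\big(\inv\sin(\rho/r_A)\big)=\rho/\sqrt{r_A^2-\rho^2}$, delivers the closed-form $\theta_c(r)$ of the statement, and the mirror symmetry $\phi\mapsto-\phi$ supplies the other half. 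Geometrically $\mathcal{R}_1(\mathbf{x})$ is a petal-shaped ``onion slice'' with apex at $A$, fanning open toward the radar.

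The routine-but-delicate part, which I expect to be the main obstacle, is the bookkeeping that collapses the two endpoint inequalities into the clean symmetric wedge: one must check that for $\phi>0$ precisely the first inequality is active and for $\phi<0$ precisely the second, so that the feasible set is exactly this wedge rather than something larger or disconnected, and one must stay on the principal branch of $\inv\sin$ (legitimate since all the angles involved are small) when extracting $\theta_c(r)$. A related subtlety worth pinning down is the exact radial cutoff: if ``complete blockage'' is read as requiring the blocker to occlude $A$'s extremal visible rays rather than merely its center, the natural cutoff becomes the radar-to-tangent-point distance $\sqrt{r_A^2-\rho^2}$ on $A$ --- which is exactly where the stated $\theta_c(r)$ vanishes --- so the region and its support must be stated consistently.
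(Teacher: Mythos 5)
Your route is genuinely different from the paper's: you work directly from the interval-containment condition $I(\mathbf{y})\supset I(\mathbf{x})$, which collapses exactly to $|\theta-\theta_A|\le \inv\sin(\rho/r)-\inv\sin(\rho/r_A)$ together with $\rho\le r<r_A$. That part is correct and cleaner than the paper, which instead runs a geometric construction (tangent chords, similar triangles, a ``margin of visibility'' $Q'D>0$) and explicitly invokes approximations along the way (it assumes a certain segment $C_2T$ is negligible and uses $C_2H\approx r\tan\theta_c$). The problem is your last step: you assert that solving $r\sin\bigl(\phi+\inv\sin(\rho/r_A)\bigr)=\rho$ and ``simplifying'' yields the stated $\theta_c(r)=\inv\tan\bigl(\rho/r-\rho/\sqrt{r_A^2-\rho^2}\bigr)$. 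It does not. Expanding gives $\tan\phi=\frac{\rho}{r\cos\phi\cos\beta}-\frac{\rho}{\sqrt{r_A^2-\rho^2}}$ with $\beta=\inv\sin(\rho/r_A)$, and you only recover the theorem's formula after setting $\cos\phi\cos\beta\approx 1$, i.e., a small-angle approximation you never acknowledge. Equivalently: your exact half-width $\inv\sin(\rho/r)-\inv\sin(\rho/r_A)$ is simply not equal to the stated $\theta_c(r)$ (e.g., $\rho=0.25$, $r_A=2$, $r=0.5$ gives $0.398$ versus $0.358$ rad), so an exact derivation cannot ``deliver the closed form of the statement.''

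The honest conclusion — which your approach actually makes visible, and which the paper's own footnotes concede — is that the theorem's $\theta_c(r)$ is a first-order approximation to the true boundary of $\mathcal{R}_1(\mathbf{x})$, valid when $\rho/r$ and $\rho/r_A$ are small. To repair your write-up, either (i) stop at the exact characterization $|\theta-\theta_A|\le\inv\sin(\rho/r)-\inv\sin(\rho/r_A)$ and note that the stated $\theta_c(r)$ follows under the small-angle regime $\rho\ll r$, or (ii) state explicitly where the approximation enters. As written, the claim of exact equality in the final simplification is a genuine gap. Your observation about the radial cutoff ($\theta_c$ vanishing at $r=\sqrt{r_A^2-\rho^2}$ rather than $r=r_A$) is another symptom of the same approximation and is worth flagging rather than reconciling away.
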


\begin{proof}\renewcommand{\qedsymbol}{}
Consider the scenario shown in Fig.~\ref{fig:single_onion} (a), wherein the center of agent $A$ is denoted by $C_1(r_A, \theta_A)$, and the tangents to $A$ from the origin $O$ meet it at $P$ and $P'$. Using simple trigonometry, it can be seen that $\angle P'C_1 P = 2\inv\cos(\rho/r_A)$, and using the cosine rule, we obtain $PP' = 2\rho\sqrt{r_A^2 - \rho^2}/r_A$. Let us next consider an arbitrary blocker $B_1$, located at a distance $r$ from the TRX, such that it leads to a Complete 1-Blockage of $A$. From Fig.~\ref{fig:single_onion} (a), we see that $\triangle QOQ' \sim \triangle POP'$, and as a result, we have $OH/OC_1 = QQ'/PP'$. This gives us the length of $QQ' = 2\rho r/\sqrt{r_A^2 - \rho^2}$. We gauge the extent of occlusion by considering the portion of $QQ'$ not contained in $B_1$ as the margin of visibility. Fig.~\ref{fig:single_onion} (b) depicts a scenario where the blocker, $B_1$, with center at $C_2(r, \theta)$, partially occludes $A$ (not shown). Thus, $A$ is not completely blocked only if it has a positive margin of visibility \textit{i.e.,} $Q'D > 0$. It can then be seen that\footnote{Technically $HD = TD - TH$, where $T$ is the foot of the perpendicular dropped from $C_2$ onto $QQ'$. We assume $C_2T$ is small to simplify analysis.} 
\begin{align}
\label{eq:geom_single_onion}
Q'D = HQ' - HD \approx HQ' - (C_2D - C_2H)
\end{align}
From symmetry, it can be seen that $HQ' = QQ'/2 = \rho r/\sqrt{r_A^2 - \rho^2}$, and $C_2D = \rho$ by definition. Considering the $\triangle OC_2H$, we further conclude that $C_2H \approx r\tan\theta_c$. Substituting these into Eq.~\ref{eq:geom_single_onion}, and solving for $\theta_c$, we obtain the desired result.
\end{proof}

\begin{figure}
    \centering
        \includegraphics[width=\linewidth]{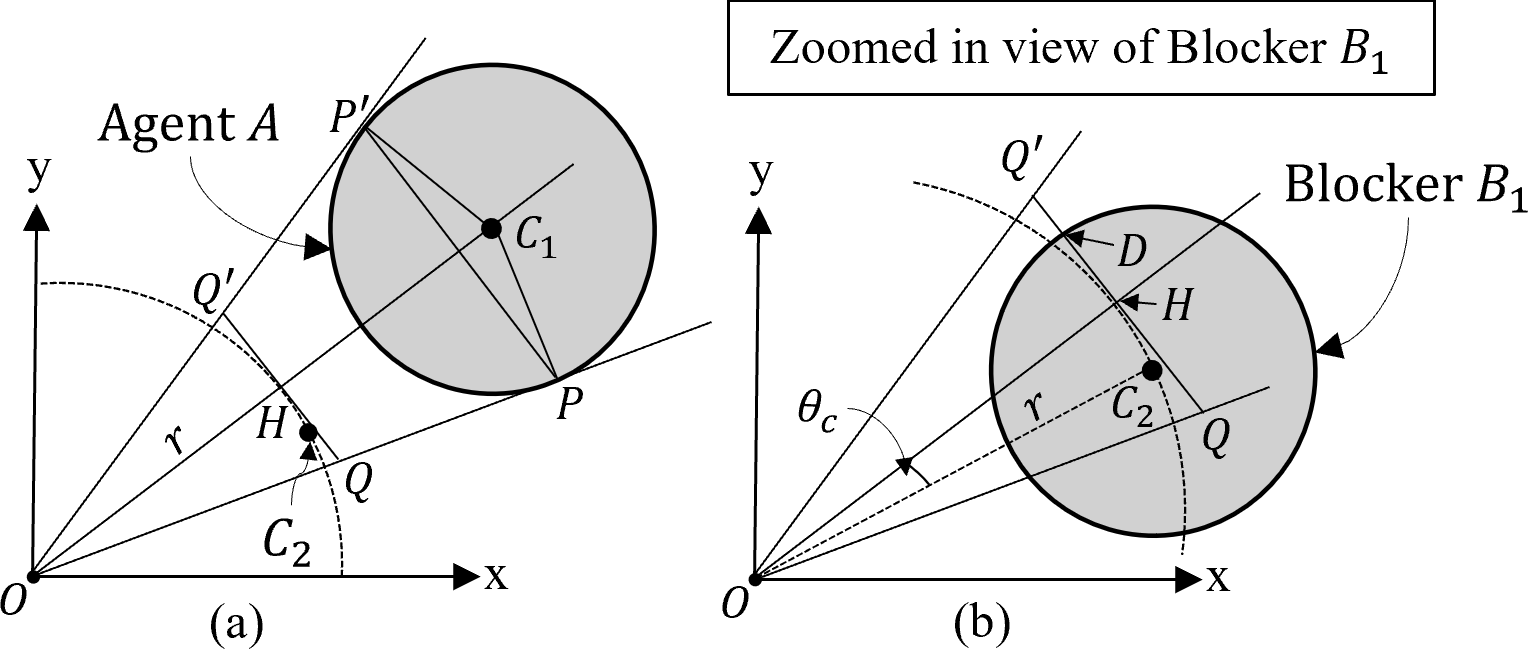}
        \vspace{-20pt}
        \caption{Geometry of a Complete 1-Blockage: (a) $QQ'$ defines the maximum margin of visibility for an agent $A$, centered at $C_1$, encountering a Complete 1-Blockage due to a blocker $B_1$ at radius $r$, centered at $C_2$. (b) Positive margin of visibility $Q'D > 0$ due to $B_1$ at radius $r$, centered at $C_2$.} %$A$ is not shown for clarity.}
        \vspace{-15pt}
     \label{fig:single_onion}
\end{figure}

\subsection{$\mathbb{P}(D_2|N,\mathbf{x})$: At Least One Simultaneous 2-Blockage}
We next analyze the probability that the agent $A$ (still located at $\mathbf{x}$) experiences at least one Simultaneous 2-Blockage, as introduced in Def.~\ref{def:complete_two_blockage}. We define $T_i$ as the event that $A$ encounters a Simultaneous 2-Blockage due to the $i^{\text{th}}$ pair of blockers, where $1 \leq i \leq L$, and $L = \binom{N-1}{2}$ is the total number of unique blocker pairs. Thus, we have
\begin{align}
\label{eq:iep_d3}
    \mathbb{P}(D_2|N,\mathbf{x}) = \mathbb{P}(T_1 \cup T_2 \cup \hdots \cup T_L | N, \mathbf{x}).
\end{align}
We shall now argue that the events $T_i$ are identically distributed when conditioned on $\mathbf{x}$ but are not necessarily independent. We see from Def.~\ref{def:complete_two_blockage} and Lemmas~\ref{lemma:complete_2_blocker} \&~\ref{lemma:same_side} that the pair of blockers $(B_1, B_2)$ causing the Simultaneous 2-Blockage are fully determined by their visibility intervals and a fixed inter-dependence structure (both of them located on the opposite sides of $A$, and $I_{B_1} \cap I_{B_2} \neq \phi$). 
Since $B_1$ and $B_2$ are identically distributed for a given $A$, the joint distribution of the pair $(B_1, B_2)$ responsible for the Simultaneous 2-Blockage is the same across all such blocker pairs when conditioned on $A$ being positioned at $\mathbf{x}$. 

To show that the events $T_i$ are not independent, consider $T_i$ and $T_j$ to be two events that correspond to $A$ experiencing Simultaneous 2-Blockages by the blocker pairs $(B_{1i}, B_{2i})$ and $(B_{1j}, B_{2j})$ respectively, where $B_{1i}$ represents the first blocker in the $i^{\text{th}}$ blocker pair. If the pairs share a common element, e.g., $B_{2i} = B_{1j}$, then $\mathbb{P}(T_j|T_i,N,\mathbf{x}) > \mathbb{P}(T_j|N, \mathbf{x})$, as the occurrence of $T_i$ already establishes the presence of $B_{1j}$, increasing the likelihood of the occurrence of $T_j$. This complicates the direct use of the PIE, as it involves a combinatorial evaluation of the conditional probabilities in the chain rule expansion of $S_k$ stated in Theorem~\ref{thm:IEP}. To continue our analysis, we choose to disregard this dependency, and thus, we assume that the events $T_i \sim T$ are i.i.d for all $1 \leq i \leq L$, acknowledging that this simplification incurs a loss in the accuracy of our probabilistic blockage modeling. Consequently, the following approximation is made in the PIE expansion:
\begin{align}
\label{eq:s_k_for_2_blockage}
& \mathbb{P}(D_2|N,\mathbf{x})\approx \!\sum_{k=1}^{L}(-1)^{k+1}\binom{L}{k}\mathbb{P}\!\left(T|N,\mathbf{x}\right)^k.
\end{align}
For $N \geq 3$, we evaluate $\mathbb{P}(T|N, \mathbf{x})$ by utilizing Def.~\ref{def:complete_two_blockage}. More specifically, we require the two disc-shaped blockers $B_1$ and $B_2$ to lie in a region such that $I_A \subset ( I_{B_1} \cup I_{B_2})$, $I_A \not\subset I_{B_1}$ and $I_A \not\subset I_{B_2}$ (see Fig.~\ref{fig:vis-scenarios} (d)). In other words,
\begin{align}
\label{eq:u_k_exact_2_blockage}
\mathbb{P}(T|\mathbf{x}) = p_2(\mathbf{x}) \overset{\Delta}{=} \int_{\mathcal{R}_2(\mathbf{x})} \mathbb{P}(\mathbf{y})\mathbb{P}(\mathbf{z})d\mathbf{y}d\mathbf{z},
\end{align}
where $\mathcal{R}_2(\mathbf{x}) = \{(\mathbf{y}, \mathbf{z}) \in Q\times Q\mid (I(\mathbf{x}) \subset I(\mathbf{y}) \cup I(\mathbf{z})) \wedge (I(\mathbf{x}) \not\subset I(\mathbf{y})) \wedge (I(\mathbf{x}) \not\subset I(\mathbf{z}))\}$. We substitute Eq.~\ref{eq:u_k_exact_2_blockage} into Eq.~\ref{eq:s_k_for_2_blockage}, to see that $\mathbb{P}(D_2|N,\mathbf{x}) \approx 1 - (1 - p_2(\mathbf{x}))^{\binom{N-1}{2}}$. The evaluation of $\mathcal{R}_2(\mathbf{x})$ is discussed in Sec.~\ref{sec:exp_res}.
%%%%%%%%%%%%%%%%%%%%%%%

\subsection{$\mathbb{P}(D_1\cap D_2|N,\mathbf{x})$: At Least One Complete 1-Blockage and One Simultaneous 2-Blockage}
In this section, we analyze the probability that the agent $A$ experiences at least one Complete 1-Blockage and at least one Simultaneous 2-Blockage. This term makes sense only when $N \geq 4$, and it is set to zero otherwise. We define $F_i = O_i \cap D_2$ as the event that $A$ experiences a Complete 1-Blockage due to the $i^{\text{th}}$ blocker and concurrently at least one Simultaneous 2-Blockage. Thus, we have
\begin{align}
    \mathbb{P}(D_1\cap D_2|N,\mathbf{x}) = \mathbb{P}(F_1 \cup F_2 \cup \hdots \cup F_{N-1} | N, \mathbf{x}).
\end{align}
From the independence of agent placement, it is easy to see that the events $F_j$ are identically distributed. We demonstrate the factorization of a generic summand $S_k$ in order to apply Theorem~\ref{thm:IEP}. Using Defs.~\ref{def:complete_one_blockage} and~\ref{def:complete_two_blockage}, we see that a blocker involved in a Complete 1-Blockage cannot concurrently participate in a Simultaneous 2-Blockage. This yields
\begin{align}
\label{eq:pop_reduce_s_k}
&\mathbb{P}\!\left(F_{i_1}\cap \hdots \cap F_{i_k}\!\mid\! N, \mathbf{x}\right) = \mathbb{P}\!\left(O_{i_1}\cap \hdots O_{i_k}\cap D_2\!\mid\! N, \mathbf{x}\right)\nonumber \\
&=\mathbb{P}\!\left(O_{i_1}\cap \hdots \cap O_{i_k}\!\mid\! N, \mathbf{x}\right)\mathbb{P}(D_2\mid O_{i_1}\cap \hdots O_{i_k}, N, \mathbf{x}) \nonumber \\
&=\mathbb{P}\!\left(O|N,\mathbf{x}\right)^k\mathbb{P}(D_2|N-k,\mathbf{x})
\end{align}
Using Eq.~\ref{eq:s_k_exact_1_blockage} and~\ref{eq:u_k_exact_2_blockage}, we approximate $\mathbb{P}(D_1\cap D_2|N,\mathbf{x})$ as
\begin{align*}
& \approx\sum_{k=1}^{N-3}(-1)^{k+1} \binom{N-1}{k}p_1(\mathbf{x})^k\left(1 - (1 - p_2(\mathbf{x}))^{\binom{N-k-1}{2}}\right)\nonumber
\end{align*}

\subsection{$\mathbb{P}(\mathcal{V}|N,\mathbf{x})$: Likelihood of Visibility}
We now integrate the various theoretical components derived in the previous subsections to assess the likelihood that a single agent $A$ is visible at location $\mathbf{x}\in Q$ in a crowd of $N$ disc-shaped agents. We substitute the relevant terms on the right-hand side of Eq.~\ref{eq:vis_n_x} and simplify to obtain:
\begin{align}
\label{eq:vis_n_x_evaluated}
&\mathbb{P}(\mathcal{V} | N, \mathbf{x}) \approx (1 - p_1(\mathbf{x}))^{N-1} + (1 - p_2(\mathbf{x}))^{\binom{N-1}{2}} - 1 \nonumber \\
&\!\!+\!\!\sum_{k=1}^{N-3}\!(-1)^{k+1}\!\binom{N-1}{k}p_1(\mathbf{x})^k\!\left(\!1\!-\!(1\!-\!p_2(\mathbf{x}))^{\binom{N-k-1}{2}}\!\right)
\end{align}

\begin{figure*}
    \centering
    \includegraphics[width=\linewidth]{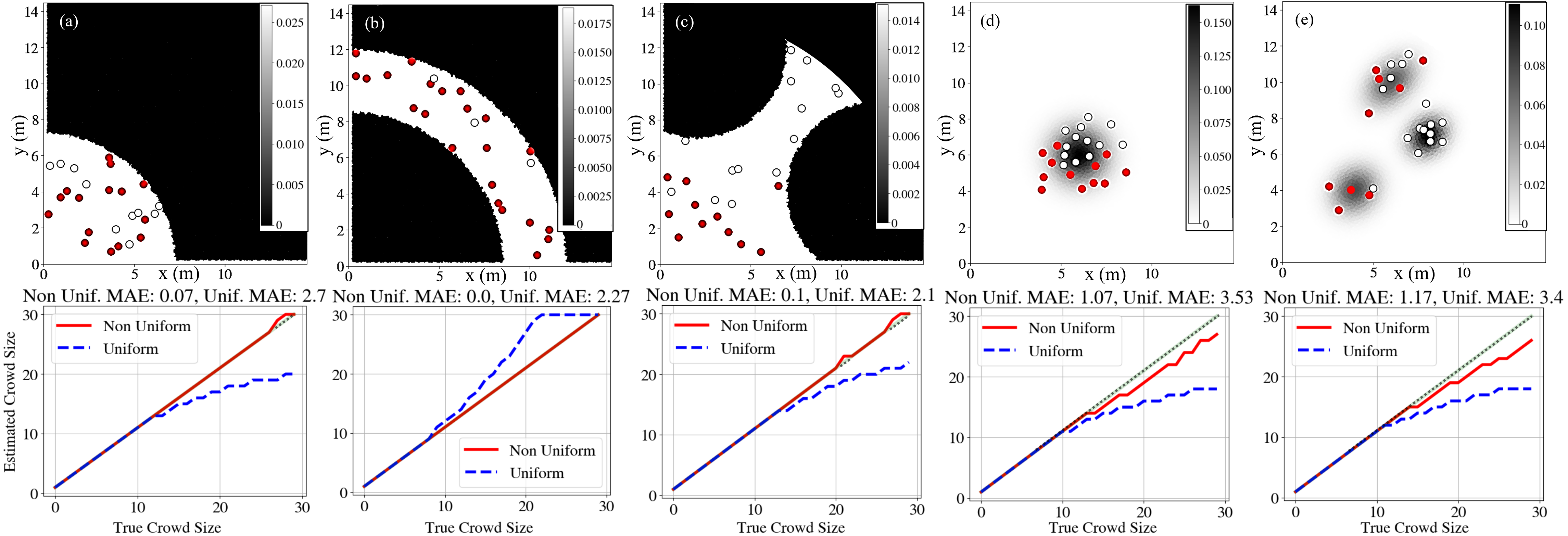}
    \vspace{-18pt}
    \caption{Simulation results for crowd size estimation: (Top) Panels show 5 different sample areas with the non-uniform spatial distribution function, $\mathbb{P}(\mathbf{x})$, displayed in the background. Each panel captures a snapshot of $N$ = 25 agents placed according to $\mathbb{P}(\mathbf{x})$, where filled circles represent visible agents, and empty circles indicate blocked agents for a mmWave TRX located at the origin. (Bottom) Crowd size estimation for different spatial distributions. Lower MAE is observed with our framework (red), particularly for larger crowd sizes, whereas the baseline (blue), which assumes a uniform spatial distribution, leads to less accurate estimation. See color PDF for proper viewing.}
    \vspace{-18pt}
    \label{fig:res_fig}
\end{figure*}

\subsection{Crowd Size Estimation}
We next show how to use the aforementioned models to estimate the crowd size, solely based on the statistics of the observed number of visible agents, denoted by $N_v$ (which is what the mmWave board senses). We define the expected visibility, \textit{i.e.,} the probability that an arbitrary agent is visible in a crowd of size $N$, as $\mathbb{P}(\mathcal{V}|N)$. By averaging our derived $\mathbb{P}(\mathcal{V}|N, \mathbf{x})$ of Eq.~\ref{eq:vis_n_x_evaluated} over the spatial distribution of the crowd, we will have,
\begin{align}
\label{eq:expected_visibility}
    \mathbb{P}(\mathcal{V} | N) = \int_{Q} \mathbb{P}(\mathbf{x})\mathbb{P}(\mathcal{V}|N, \mathbf{x})d\mathbf{x}.
\end{align}

We assume that visibility is independent across the crowd and can thus model the total number of agents visible to the radar as a binomial random variable \textit{i.e.,} $N_v \sim Bin(N, p)$, whose success parameter $p$ is given by $\mathbb{P}(\mathcal{V}\mid N)$. This yields the following analytical expression for the PMF of number of visible agents, given a crowd of size $N$:
\begin{align}
\label{eq:binomial}
    \mathbb{P}_a(N_v|N)=\binom{N}{N_v}(\mathbb{P}(\mathcal{V}|N))^{N_v}(1 - \mathbb{P}(\mathcal{V}|N))^{N-N_v}.
\end{align}
Finally, we can estimate the crowd size by first gathering the statistics of $N_v$ from the mmWave sensing board over a small time interval, and constructing an empirical distribution $\mathbb{P}_e(N_v=n)$. We then estimate $N \in \mathcal{N} = \{1, 2, \hdots, N_{\text{max}}\}$ as the value that minimizes the Kullback-Leibler divergence between $\mathbb{P}_e$ and $\mathbb{P}_a(N_v\mid N)$:
\begin{align}
\label{eq:network_size}
    N^*\!\!=\!\argmin_{N\in \mathcal{N}}D_{\text{KL}}\biggl(\mathbb{P}_e(N_v=n)\ ||\ \mathbb{P}_a(N_v=n\mid N)\biggr).
\end{align}

We next present numerical examples that demonstrate the effectiveness of this crowd size estimation technique.

%%%%%%%%%%%%%%%%%%%%%%%%%%%%%%%%%%%%%%
%%%%%%%%%%%%%%%%%%%%%%%%%%%%%%%%%%%%%%
%%%%%%%%%%%%% Section IV %%%%%%%%%%%%%
%%%%%%%%%%%%%%%%%%%%%%%%%%%%%%%%%%%%%%
%%%%%%%%%%%%%%%%%%%%%%%%%%%%%%%%%%%%%%
\section{Numerical Examples and Discussion}\label{sec:exp_res}
In this section, we present simulation-based examples that demonstrate the effectiveness of our proposed crowd size estimation method. Specifically, we provide results for crowds of varying sizes sampled from a non-uniform spatial distribution, $\mathbb{P}(\mathbf{x})$. We set $r_m=14.5$~m, $\rho=0.25$~m, and we realize a crowd of $N$ agents by sampling $N$ times from the points in $Q$ using $\mathbb{P}(\mathbf{x})$. We generate $10,000$ realizations for each $N$ (corresponding to $\sim$~2 minutes of FMCW transmission at a $100$~Hz frame rate) and evaluate the number of visible agents in each realization based on the blockage model described in Sec.~\ref{sec:block_mod}, as illustrated in Fig.~\ref{fig:res_fig} (Top). We then compile the statistics of visible agents and construct the empirical distribution $\mathbb{P}_e(N_v = n)$. Finally, we estimate the optimal $N^*$ using Eq.~\ref{eq:network_size}.

To compute $\mathbb{P}_a(N_v\mid N)$ as derived in Eq.~\ref{eq:binomial}, we need to evaluate the spatial integrals in Eqns.~\ref{eq:s_k_exact_1_blockage}, \ref{eq:u_k_exact_2_blockage}, and \ref{eq:expected_visibility}, which can be computationally demanding. We next outline our efficient numerical integration method. More specifically, these integrals are evaluated using standard Quasi Monte Carlo integration over a 2D Sobol sequence $\mathcal{S} = \{\mathbf{x}_i\in Q\}$~\cite{joe2008constructing}, where all spatial integrals over $Q$ are computed as summations over the points in $\mathcal{S}$. While Theorem~\ref{thm:single_onion} helps define the integration domain for $p_1(\mathbf{x})$ \textit{i.e.,} $\mathcal{R}_1(\mathbf{x})$ (see Eq.~\ref{eq:s_k_exact_1_blockage}), an analytical expression for the domain of $p_2(\mathbf{x})$ \textit{i.e.,} $\mathcal{R}_2(\mathbf{x})$ (see Eq.~\ref{eq:u_k_exact_2_blockage}), presents challenges due to the cardinality of $\mathcal{S}\times \mathcal{S}$. Therefore, we apply Lemma~\ref{lemma:complete_2_blocker} to truncate the search space, focusing only on pairs of blockers with a non-zero overlap in their visibility intervals.

We present sample results for the two categories of spatial inhomogeneity introduced in Fig.~\ref{fig:motivation}. Specifically, Figs.~\ref{fig:res_fig} (a), (b), and (c) depict crowds confined to specific navigable sub-regions of $Q$, resulting in a binary distribution. Notably, Fig.~\ref{fig:res_fig} (b) and (c) have non-convex navigable areas, which are even more challenging to address.  Figs.~\ref{fig:res_fig} (d) and (e), on the other hand, depict crowds aggregating to form hotspots, leading to a more gradual decay in the spatial density.

The results shown in Fig.~\ref{fig:res_fig} (Bottom) demonstrate that our framework can estimate the crowd size well. For comparison, the figures also show the baseline state-of-the-art, which assumes a uniform spatial distribution. As can be seen, our framework consistently performs well across various non-uniform distributions, while assuming a uniform prior results in significant estimation errors, particularly for larger crowds. 
%Notably, the uniform distribution can also overestimate the crowd size, as seen in Fig.~\ref{fig:res_fig} (b). 
%sHowever, by accurately accounting for Complete 1-Blockages and Simultaneous 2-Blockages, Overall, our method achieves a perfect estimation (0 MAE) for this specific case. 
Overall, we achieve an MAE of 0.48 agents across all presented environments, spatial distributions, and crowd sizes of up to (and including) $N_{\text{max}}$ = 30 agents, compared to the benchmark MAE of 2.8 agents when using the uniform prior, \textit{i.e.,} $\mathbb{P}(\mathbf{x}) = \mathds{1}_Q(\mathbf{x})/|Q|$.  

The next steps in this work would be to test the proposed framework in more advanced simulators that account for human body dynamics, temporal crowd evolution, and precise mmWave scattering off of agents, in addition to experimental validations.

% \vspace{-3pt}
\section{Conclusion}
% \vspace{-3pt}
 In this paper, we proposed a novel approach for
crowd size estimation using monostatic mmWave radar. Our framework is
capable of accurately counting large crowds, even when they are not evenly
distributed across space. We first derive location-dependent occlusion probabilities, which we then use to mathematically characterize the probability distributions for the number of agents visible to the radar, as a function of the crowd size. Our numerical simulations demonstrate strong performance, with a mean absolute error of 0.48 agents across various environments and crowd spatial distributions, while handling scenarios with up to and including 30 agents. Overall, this work can enable network resource management as part of ISAC and can further contribute to other applications in the areas of crowd management and urban planning.

\bibliographystyle{IEEEtran}
\bibliography{main}
\end{document}